\newtheorem{myDef}{Definition} 
\newtheorem{myTheo}{Theorem}
\newtheorem{proof}{Proof}
\newcommand{\RNum}[1]{\uppercase\expandafter{\romannumeral #1\relax}}
\def\BibTeX{{\rm B\kern-.05em{\sc i\kern-.025em b}\kern-.08em
    T\kern-.1667em\lower.7ex\hbox{E}\kern-.125emX}}
\begin{document}
\title{A Pre-Allocation Design for Cost Minimization and Delay Constraint in Vehicular Offloading System\\
}
\author{
\IEEEauthorblockN{Zhijie Chen\IEEEauthorrefmark{1}\IEEEauthorrefmark{2}, Bo Yang\IEEEauthorrefmark{2}, Cailian Chen\IEEEauthorrefmark{2}, Xinping Guan\IEEEauthorrefmark{2}~\IEEEmembership{Fellow,~IEEE}} 
  
\IEEEauthorblockA{\IEEEauthorrefmark{1}Department of Computer Science and Engineering, Shanghai Jiao Tong University, Shanghai, China }  
\IEEEauthorblockA{\IEEEauthorrefmark{2}Key Laboratory of System Control and Information Processing, Ministry of Education of China, China } 
}

\maketitle

\begin{abstract}
To accommodate exponentially increasing traffic demands of vehicle-based applications, operators are utilizing offloading as a promising technique to improve quality of service (QoS), which gives rise to the application of Mobile Edge Computing (MEC). While the conventional offloading paradigms focus on delay and energy tradeoff, they either fail to find efficient models to represent delay, especially the queueing delay, or underestimate the role of MEC Server. In this paper, we propose a novel \textbf{P}re-\textbf{A}llocation \textbf{D}esign for vehicular \textbf{O}ffloading (\textbf{PADO}). A task delay queue is constructed based on an allocate-execute separate (AES) mechanism. Due to the dynamics of vehicular network, we are inspired to utilize Lyapunov optimization to minimize the execution cost of each vehicle and guarantee task delay. The MEC Server with energy harvesting devices is also taken into consideration of the system. The transaction between vehicles and server is decided by a Stackelberg Game framework. We conduct extensive experiments to show the property and superiority of our proposed framework.
\end{abstract}

\begin{IEEEkeywords}
mobile edge computing, Lyapunov Optimization, Stackelberg Game, queueing delay, energy harvesting
\end{IEEEkeywords}

\section{Introduction}
With the ever-increasing number of vehicles on the roads and the development of automobile industry, vehicles have been a significant component of the mobile devices connecting to the Internet. Nowadays, vehicles can support various mobile applications, such as image-aided navigation\cite{vu2012real} and vehicular augmented reality\cite{qiu2017augmented}. These applications require huge quantities of computation resources. The increasing needs for resources along with the pursuit of higher performance for advanced vehicular applications poses a great challenge to run computationally intensive applications on the resource constrained vehicles.

The more rigorous requirements for mobile devices on execution and storage ability calls for a new paradigm called Mobile Edge Computing (MEC) that relies on the internal wireless network to acquire computational capabilities\cite{xu2013survey}. Compared with the traditional local execution of tasks, offloading workload to an MEC Server not only reduces the execution delay but also saves the energy consumption of users. Therefore, it is pressing to take advantage of the MEC features to share the intensive workload of users\cite{liu2014effective}. 

A number of works have studied offloading schemes in an MEC system. Cordeschi $et$  $al.$ \cite{cordeschi2014reliable} designed a distributed and adaptive traffic offloading scheme for cognitive cloud vehicular networks. Sasaki et al. \cite{sasaki2016vehicle} designed an infrastructure-based vehicle control system to reduce latency and balance computational load. Wang $et$  $al.$ \cite{wang2017computational} studied computational offloading problem to optimize the total consumption cost incurred by the usage of the limited computational resources. Du $et$  $al.$\cite{du2018computation} considers a cognitive vehicular network and formulate a dual-side optimization problem to minimize vehicle side and server side cost. 

However, as the computation resources owned by vehicles are limited, not all the tasks can be disposed at once. More chances are that tasks go under a queueing stage and pending execution under an FIFO principle. For each vehicle, multiple tasks generated in different moments coexist, forming a task backlog queue. Most of the aforementioned works either ignore the queueing delay or approximate delay estimation, leading to defective and unreliable solutions. 

In addition, it is worthwhile to leverage energy harvesting  technology to capture the green energy (e.g., solar, wind and solar radiation, etc.) for charging battery constantly. Rather than the design adopted in \cite{mao2016dynamic,zhang2018energy}, energy harvesting devices in our settings are equipped by the MEC Server battery, which can embrace the benefit of massive deployment.
The integration of renewable energy is provides another option for the MEC Server other than the battery storage and power grid. At each time slot, the server can alternate the source to provide energy for its service while preventing the battery level overload. Thus the cost minimization of MEC Server is also concerned in our framework.

As comparison to the state-of-the-art works, our proposed offloading scheme explicitly takes into account the key aspects specific in vehicular network. The contributions of our work can be summarized as follows:
\begin{itemize}
\item We propose a novel \textbf{P}re-\textbf{A}llocation \textbf{D}esign for vehicular \textbf{O}ffloading (\textbf{PADO}) framework, which enables the vehicle to carry out allocate-execute separate mechanism on deciding its offloading strategy. Under the PADO framework, a delay queue which directly represents the task queueing delay is formulated. The design of delay queue can facilitate arbitary delay deadline and tasks with stratified deadline requirement.

\item The execution cost minimization problem, which is an intractable high-dimensional Markov decision problem, is formulated by a low-complexity online Lyapunov optimization based scheduling framework. The usage of Lyaponov Optimization changes the stochastic problem into a deterministic one and its solution only depends on non-stochastic variables. Both parties in the PADO framwork, a.k.a vehicles and MEC Server, adopt this framework for balancing queues stability and cost minimization.

\item A one-leader and multi-follower Stackelberg Game between vehicles and MEC Server is formulated to characterize coupled trading behavior between players. The solution to the dual-side optimization problem gives us indicators for the current vehicle's workload. A closed-form unit task price is derived, which can be used to guide vehicle's offloading action. 

\item We conduct extensive experiments to show the property and superiority of our proposed framework. Theorems about the stability of queues are verified by simulation results. Moreover, the effectiveness of the proposed algorithm is demonstrated by comparisons with three benchmark policies. It is shown that the PADO framework enables vehicles to fully utilize the execution deadline and improves in terms of execution cost. 
\end{itemize}

The rest of the paper is organized as follows. We introduce the system model in Section \RNum{2} and formulate the optimization problem for vehicles and MEC Server in Section \RNum{3}. In Section \RNum{4}, we use Stackelberg Game to determine the offloading strategy in a specific time slot. We evaluate our proposed offloading scheme with simulation in Section \RNum{5}. The paper is concluded in Section \RNum{6}.

\section{System Model}
\subsection{Mobile Edge Computing System}
Typically, a mobile edge computing system involves several vehicles and road side units. We denote $\mathcal{M} = \{1,2,...,M\}$ as the set of vehicles, where $M$ is the number of vehicles. These vehicle users are the producer of the whole system, which means they will stochastically generate certain amount of tasks at the beginning of each time slot. Here the stochasticity is two-fold, i.e., the occurrence of computation task and its numeric amount are both randomly distributed. In the light of the bursting nature of task arrival, vehicles either produce tasks, the quantity of which has a lower bound, or do not produce any new task at all. We thus use an i.i.d Bernoulli distribution across vehicles to model this burst arrival. We define $\rho$ as the parameter of the distribution, a.k.a the arrival rate. If $\xi_i^t$ is denoted to indicate the occurrence of a newly-arrived task on slot $t$ in vehicle $i$, then $p(\xi_i^t=1) = 1-p(\xi_i^t=0) = \rho$. The numeric quantity of a computation task, denoted as $R_{task}$ is uniformly distributed on the closed interval $[R_{min},R_{max}]$. So in any time interval, the actual amount of newly generated tasks $R_i(t)$ can be defined as below,
\begin{equation}
    R_i(t) = \xi_i^t \cdot R_{task}^t
\end{equation}

Other than the quantity, a computation task is attributed by other factors. The most important one is the maximal tolerant delay. To guarantee the user Quality of Service (QoS), applications with high requirement in timeliness need a response within certain time after generation. We denote $\tau_i^t$ as the time limit of vehicle $i$'s task in slot $t$.

The execution of computation task is based on physical hardware (such as GPU, memory and HDD). In actual settings, different tasks also vary in the requirement of hardware. For instance, we call an application that has high requirement in GPU configuration as a GPU-intensive task. Vehicle-based virtual reality is one of the representative applications of this kind. So in the characterization of a computation task, we need to identify the request of resources respectively so that the edge servers can organize the VMs. Without loss of generality, we denote $k$ as the number of total types of resources, and then $ \mathcal{r}=\{r_1,r_2,...,r_k\}$ is the set of configuration on hardware, where $r_i, 1 \le i \le k$ is the configuration on type-$i$ resource. Thereafter, we use a tuple $A_i^t = (R_i(t),\tau_i^t,\mathcal{r})$ to characterize a computation task based on the discussion above.

\subsection{Computation Model on Vehicles and Servers}
We assume that each computation task can be chopped into independent parts at any ratio, for local execution and offloading. This partition can be implemented by Spark~\cite{zaharia2010spark}. We denote $\alpha_i^t \in [0,1]$ as the percentage of locally computed task, and $\beta_i^t \in [0,1]$ as the offloading percentage. For vehicle $i$, it will decide $\alpha_i^t$ and $\beta_i^t$ for each task considering delay cost and the accessibility to the server.

The delay cost is comprised of two components, the queueing delay and service time a.k.a the execution time. As discussed above, vehicles can either execute the task locally, or offload it to the remote server. For the local part, the frequency scheduled for the task $A_i^t$ is denoted as $f_{i,t}^{local}$, which can be implemented by adjusting the chip voltage with DVFS techniques\cite{rabaey2002digital}. The execution time can be expressed as
\begin{equation}
    D_{i,t}^{local} = \frac{\alpha_i^t R_i^t}{f_{i,t}^{local}}
\end{equation}

Compared with the server, vehicle users have limited local resources. So chances are that tasks are not instantly executed just on its arrival, but will have to be waiting in a backlog queue with FIFO discipline. This will generate the queueing delay. To quantify the queueing delay, we propose a novel delay time indicating method.

First, to facilitate the downstream method, we make a modification for the maximum tolerant delay $\tau_i^t$, which is a continuous variable. Here binning method is used to discretize $\tau_i^t$.  $\mathbf{\Gamma} = \{\Gamma_1,\Gamma_2,...\Gamma_s\}$ is denoted as the set of typical delay constraints, where $\Gamma_1 \le \Gamma_2 \le ...\le \Gamma_s$ and $s$ is the number of delay constraints. As the task with lower delay requirement is more sensitive to the delay (e.g. the task with time limit of 10s is more sensitive to any 1s increase compared to that of time limit of 1 minute), typically we set the $\mathbf{\Gamma}$ to be a geometric sequence. We assign a delay index $h_i^t$ for each of the task according to its $\tau_i^t$ as below.

\begin{equation}
    h_i^t = \mathop{\arg\max}_{j \in [1,s]}\{\tau_j < \tau_i^t\}
\end{equation}

$h_i^t$ can also be deemed as the priority index of a task. The lower $h_i^t$ is, the quicker response for the result should be. After discretization of $\tau_i^t$ into $s$ delay constraints, we can formulate a delay time sequence denoted by $Q_{s,i}^t$ for each $\Gamma_s$, where its update equation is 
\begin{equation}
    Q_{s,i}^{t+1} = \max\{Q_{s,i}^{t} - \zeta,0\} + \frac{\alpha_i^t \mathbbm{1}_{\{h_i^t=s\}} R_i^t }{f_{s}^{local}}
\end{equation}
where $\zeta$ is the time interval of each slot, $\mathbbm{1}_{\{h_i^t=s\}}$ is an indicator function that maps to 1 if $h_i^t=s$, and 0 otherwise. We also assume the CPU-cycle frequency is constrained by $f_{s}^{max}$.The queueing delay for tasks at time slot $t$ is the delay before it is generated, namely $Q_{s,i}^{t}$. Notice that the allocation and execution are separate in our settings. In other words, $\alpha_i^t$ and $f_s^{local}$ are designated to a task when it arrives, while the true execution may begin when the task is at the head of the queue. That's why the framework is called pre-allocation here.

Another part of the task is offloaded to the MEC server. This process can be viewed as a renting behavior for the virtual machines from server. Typically, the server organizes and provides its service in the form of virtual machine, which contains quantified computation resources. These VMs work independently, and serve as a Plug and Play application. It means when the task is scheduled to the server with resources not exceeding its limit quantum, the task can be executed immediately without any queueing or waiting. Hence the delay of  offloaded tasks can be expressed as

\begin{equation}
    D_{i,t}^{server} = \frac{\beta_i^t R_i^t}{f_{i,t}^{server}}
\end{equation}

Therefore, the total delay $D_i^t$ can be expressed as 

\begin{equation}
    D_{i,t} = \left\{
    \begin{array}{lcl}
    \frac{R_i^t}{f_{i,t}^{server}}& &{\alpha = 0}\\
    \max \{Q_{s,i}^{t-1} + D_{i,t}^{local} ,D_{i,t}^{server}\} & &{\alpha \neq 0}

    \end{array} \right.
\label{eq6}
\end{equation}

One of the advantages of using MEC server for offloading is that the delay can be reduced. We assume that the computation resource on server can easily satisfy the deadline requirement of any task, say, $\tau_d$, i.e.

\begin{equation}
    \frac{R_i^t}{f_{i,t}^{server}} < \tau_d
\end{equation}

Since $D_{i,t}^{server}<\frac{R_i^t}{f_{i,t}^{server}}$, we can also get $D_{i,t}^{server}<\tau_d$. According to Eq. \ref{eq6}, if we want to guarantee the total delay bound, the following inequality must be satisfied:

\begin{equation}
    Q_{s,i}^{t} < \tau_d
\end{equation}

Apart from the delay, vehicles also pay attention to the energy cost for the task executed locally. The energy consumption model is describe below. For task $A_i^t$, $\alpha_i^t$ of the total amount is locally executed at CPU frequency $f_{i,t}^{local}$, so the energy cost throughout its execution can be expressed as
\begin{equation}
    E_i^t = \kappa (f_{i,t}^{local})^2 \frac{\alpha_i^t R_i^t}{f_{i,t}^{local}} = \kappa \alpha_i^t f_{i,t}^{local} R_i^t
\end{equation}
where $\kappa$ is the effective switched capacitance that depends on the chip architecture \cite{burd1996processor}.

\subsection{Energy Model on Server}
In this paper, the MEC server is equipped with a renewable energy generator, which consistently provides energy supply for the offloading system\cite{yang2016distributed}. We denote the production rate as $U^t \in [0, U_{max}]$. The server benefits from the generation of renewable energy by reducing the amount of buying energy from the central electric grid. Compared to that from the latter one, the cost of renewable energy is negligent in this scenario, exempt from the investment of raw materials as well as the depravity of long-distance transmission. To make the best of the harvested energy, the server is also equipped with a battery as a buffer to store the energy. We denote the charging level of the battery in slot $t$ as $B^t$.

For the MEC server, it supplies energy for the tasks offloaded from vehicle users. For task $A_i^t$, we have a total amount of $(1-\alpha_i^t) R_i^t$ with allocated CPU frequency $f_{i,t}^{server}$. So to empower the rent virtual machine, energy consumption for task $A_i^t$ can be expressed as
\begin{equation}
    N_i^t = \kappa (f_{i,t}^{server})^2 D_{i,t}^{server}= \kappa \beta_i^t f_{i,t}^{server} R_i^t
\end{equation}
To avoid charging deficiency, the generated renewable energy is firstly supplied to execute the offloaded tasks. If there is redundant energy, then the surplus will be charged into the battery. Let $C^t$ be the current charging rate to the battery and thus $C^t \in [0, \min \{C^{max},(U^t-\sum_{i=1}^M N_i^t)^+\}]$, where $(x)^+ \triangleq \max\{0,x\}$. Otherwise, when the instant energy consumption exceeds the amount of renewable energy generation, the energy supply gap will be filled by either the battery or by the power grid. We denote the amount from latter resource as $G^t$. And so $(\sum_{i=1}^M N_i^t-U^t)^+-G^t$ is the discharged amount of energy from battery under this condition.

With the notations stated above, we can derive the battery dynamics on the basis of charging and discharging strategy, 
\begin{equation}
    B_i^{t+1}=B_i^t-\eta^-((\sum_{i=1}^M N_i^t-U^t)^+-G^t) + \eta^+ C^t
\end{equation}
where $0<\eta^+\le1$ and $\eta^-\ge1$ represent the charging and discharging efficiency respectively. And $\eta^-((\sum_{i=1}^M N_i^t-U^t)^+-G^t)$ denotes the amount of energy extracted from the battery, so it must satisfies:
\begin{equation}
   0 \le \eta^-((\sum_{i=1}^M N_i^t-U^t)^+-G^t) \le B_i^t
\label{eq12}
\end{equation}
which means the extracted energy must not exceed the current battery storage.

The backlog of tasks in vehicular users and the energy storage makes the offloading decisions for users and energy supply policy for servers to be more complicated, compared to the conventional mobile edge computing systems. These two kinds of backlog entail temporally correlated workload and energy level and makes the system decisions coupled in different time slots. In other words, the decisions are not myopic any more but to make a tradeoff between the long term and current profit.

\section{Problem Formulation}
\subsection{Cost minimization for vehicular users}
The vehicular users are motivated to minimize the time-average expected cost for executing the tasks with guaranteed delay. This problem can be formulated as below,

\begin{align*}
    \mathbf{P1}: & \min  \quad U_i =\lim_{T \to +\infty} \sum_{t=0}^{T-1}\mathbb{E}[\sum_{s = 1}^S E_i^t + \frac{\beta_i^t R_i^t}{f_{i,t}^{server}}g(f_{i,t}^{server}) \\
    &~~~~~~~~~~~~~~~~~~~~~~~~~~~~+ (1-\alpha_i^t-\beta_i^t)\Upsilon]\\
    & \begin{array}{r@{\quad}c@{}}
        s.t.\quad C1:& Q_{i,t} < \tau_d\\
            \quad C2: & 0 \le f_s^{local} \le f_s^{max}\\
            \quad C3: & 0 \le \alpha_i^t \le 1\\
            \quad C4: & 0 \le \beta_i^t \le 1\\
            \quad C5: & \alpha_i^t + \beta_i^t \le 1\\
       \end{array}
\end{align*}
where $\sum_{s = 1}^S E_i^t$ denotes the overall energy consumption for local execution. And $\frac{\beta_i^t R_i^t}{f_{i,t}^{server}}g(f_{i,t}^{server})$ denotes the payment for VMs from MEC server. As a general business model, the service provided by server is charged by time. $g(f_{i,t}^{server})$ denotes the server's unit price for providing CPU frequency $f_{i,t}^{server}$. The expectation above is with respect to the potential randomness of the control policy. $\Upsilon$ is the drop loss of a task. If the task is neither executed locally nor offloaded, it can be deemed as dropped (or deployed to cloud), the price of which is $\Upsilon$.

One challenge in solving $\mathbf{P1}$ is due to constraint $C1$, which brings time correlation to the problem. We leverage tools from Lyapunov optimization framework. First, we are going to modify $C1$, and transform the off-line problem $\mathbf{P1}$ to an online optimization problem. To guarantee the delay of task execution, we construct a virtual queue $W_i^t$, which can be expressed as follow:
\begin{equation}
    W_{s,i}^{t+1} = \max \{ W_{s,i}^t + Q_{s,i}^{t+1} - \Gamma_s,0 \}
\end{equation}

We can define the Lyapunov function as $L_i(t) = \frac{1}{2}\sum_{i=1}^S (W_{s,i}^t)^2$. Then, the conditional Lyapunov drift-plus-penalty for slot $t$ is given by:
\begin{equation}
    \Theta(L_i(t)) = \mathbb{E}[L_i(t+1)-L_i(t)+ V\cdot U_i^t|W_i(t)]
\end{equation}
where $V$ is a user-determined hyperparameter.
While we assume that in any time slot, at most one task is newly generated, whose delay priority is denoted as $\hat{s}$. Then the drift-plus-penalty can be slacked with an upper bound,

\begin{align*}
    & \Theta(L_i(t))+V\cdot U_i  \\ &=\sum_{s=1}^S\frac{1}{2}[(W_{s,i}^{t+1})^2-(W_{s,i}^t)^2] +V[\sum_{s = 1}^S E_i^t + \frac{\beta_i^t R_i^t}{f_{i,t}^{server}}g(f_{i,t}^{server})] \\
    &\le\sum_{s=1}^S (\Gamma_s-\frac{\alpha_i^t \mathbbm{1}_{\{h_i^t=s\}} R_i^t }{f_{s}^{local}})W_{s,i}^t + \frac{1}{2}\Gamma_s^2-\Gamma_s\frac{\alpha_i^t \mathbbm{1}_{\{h_i^t=s\}} R_i^t }{f_{s}^{local}} \\
    &\quad +\frac{1}{2}(Q_{s,i}^{t+1})^2 +V[\kappa\beta_i^tf_s^{local}R_i^t+\frac{\beta_i^t R_i^t}{f_{i,t}^{server}}g(f_{i,t}^{server})]\\
    & = B + (|Q_{\hat{s},i}^t - \zeta|+W_{\hat{s},i}^t-\Gamma_{\hat{s}})\frac{\alpha_i^t R_i^t}{f_{\hat{s}}^{local}}+\frac{1}{2}(\frac{\alpha_i^t R_i^t}{f_{\hat{s}}^{local}})^2\\
    &\quad +V[\kappa\beta_i^tf_s^{local}R_i^t+\frac{\beta_i^t R_i^t}{f_{i,t}^{server}}g(f_{i,t}^{server})]
\end{align*}
where $B$ is a constant with regard to the function with variables $f_s^{local}$, $\alpha_i^t$ and $\beta_i^t$.
\begin{proof}
See Appendix A.
\end{proof}

At every time slot, the vehicle will make decisions of the local allocated CPU frequency and the proportion to be offloaded for controlling the upper bound of cost. To this end, the Lyapunov framework minimizes the right hand side of the drift-plus-penalty expression. The optimization problem for vehicle $i$ can thus be formulated as

\begin{align*}
    \mathbf{P2}:\min \quad &\widetilde{U}_i =  (|Q_{\hat{s},i}^t - \zeta|+W_{\hat{s},i}^t-\Gamma_{\hat{s}})\frac{\alpha_i^t R_i^t}{f_{\hat{s}}^{local}}+\frac{1}{2}(\frac{\alpha_i^t R_i^t}{f_{\hat{s}}^{local}})^2\\
    &+V[\kappa \beta_i^tf_s^{local}R_i^t+\frac{\beta_i^t R_i^t}{f_{i,t}^{server}}g(f_{i,t}^{server})+(1-\alpha_i^t-\beta_i^t)\Upsilon] \\
    s.t. & \quad C2,\quad C3,\quad C4,\quad C5
\end{align*}

\subsection{Revenue Maximization for MEC Server}
As for MEC server, it provides high-quality and ultra-low latency service for vehicles. Its objective is to make a profit for providing such service. However, it also pays close attention to the remnant in battery. For a sustainable business mode, the battery should be in a healthy and stable across time.  The revenue maximization problem can be defined as below

\begin{align*}
    \mathbf{P3}: \quad& \max \sum_{i=1}^N \frac{\beta_i^t R_i^t}{f_{i,t}^{server}}g(f_{i,t}^{server}) - \chi^t G^t \\
    s.t.\quad & (7),(11),(12), \\
    & B_i^t < \infty, \quad  \forall  t\quad 1 \le t \le T
\end{align*}
where $\chi^t$ is the unit price of purchasing electricity from power grid. So $\chi^t G^t$ is the overall payment for extra power when the renewable energy supply is insufficient. 

Nevertheless, due to the energy causality constraint (\ref{eq12}), the decision making process of $G^t$ is coupled among different time slots. To facilitate further analysis, we first introduce an upper bound $E_{max}$  for the discharged energy, and the problem can be rewritten as the following
\begin{align*}
    \mathbf{P4}: \quad& \Omega = \max \sum_{i=1}^N \frac{\beta_i^t R_i^t}{f_{i,t}^{server}}g(f_{i,t}^{server}) - \chi^t G^t \\
    s.t.\quad & (7),(11), \\
    & \eta^-((\sum_{i=1}^M N_i^t-U^t)^+-G^t) \le E_{max} ,\\
    & B_i^t < \infty, \quad  \forall  t\quad 1 \le t \le T
\end{align*}

As will be elaborated later, the proposed solution to $\mathbf{P4}$ also satisfies constraint ($\ref{eq12}$). Next we define the perturbation parameter $\theta$ and virtual energy queue $\widetilde{B}^t$ respectively,
\begin{equation}
    \theta \ge \frac{H\chi^t}{\eta^-}+E_{max}
\label{eq15}
\end{equation}
\begin{equation}
    \widetilde{B}^t = B^t - \theta
\end{equation}
where $H > 0$ is a positive control parameter. With regard to the stability of battery queue, we adopt a similar Lyapunov framework on MEC server. The Lyapunov drift-plus-penalty is defined as
\begin{align*}
    &\Delta(J(t))-H \cdot \Omega \\
    &= \frac{1}{2}( (\widetilde{B}^{t+1})^2-(\widetilde{B}^{t})^2)-H(\sum_{i=1}^N \frac{\beta_i^t R_i^t}{f_{i,t}^{server}}g(f_{i,t}^{server}) - \chi^t G^t)\\
    &\le C + (\widetilde{B}^{t}\eta^-+H\chi^t)G^t\\
    &\quad -\sum_{i=1}^M\beta_i^tR_i^t(\kappa\widetilde{B}^{t}\eta^-f_{i,t}^{server}+\frac{H}{f_{i,t}^{server}}g(f_{i,t}^{server}))
\end{align*}

To minimize the RHS of the above inequality, the MEC server make decisions of $G^t$ and $f_{i,t}^{server}$. So the revenue maximization of MEC server can now be expressed as 
\begin{align*}
    \mathbf{P5}: \min & \quad  (\widetilde{B}^{t}\eta^-+H\chi^t)G^t
    -\sum_{i=1}^M\beta_i^tR_i^t(\kappa\widetilde{B}^{t}\eta^-f_{i,t}^{server}\\
    & +\frac{H}{f_{i,t}^{server}}g(f_{i,t}^{server}))\\
    s.t.& \quad (7),(11)
\end{align*}

Thus, in this vehicular edge computing scenario, we aim at devising a bidirectional pricing and energy management scheme for both vehicle users and MEC server. Meanwhile, we expect to guarantee the long term profit of all the agents in the system. To this end, a Stackelberg Game based algorithm is proposed for the agents' own profit maximization respectively. 

\section{A Stackelberg Game Approach}
In this section, we develop a Stackelberg Game model to analyze the offloading mechanism between vehicular users and MEC server according to optimization problem $\mathbf{P2}$ and $\mathbf{P5}$ respectively. First, we formally define the game played under such scenario over the $T$ slots. This game contains vehicular users and MEC server. The bidirectional pricing scheme set by MEC server will variously impact the offloading decision and local allocation for vehicles, which will conversely affect the planning of mechanism of MEC server through its total revenue from users. This leads to a typical instance of the Stackelberg Game, where the MEC server works as a leader and the $M$ vehicular users are the followers who are subject to the decision made by leader. Here we give the complete modeling of this one-leader and multi-follower Stackelberg Game as below,

\begin{myDef}
Vehicular Offloading Stackelberg Game
\noindent $\textbf{Players}$: $M$ vehicular users and  1 MEC server.

\noindent $\textbf{Strategies}$: Each vehicular user  $m \in \mathcal{M}$ determines its own strategy $\mathbf{x_m}=\{\alpha_m^t,\beta_m^t,f_s^{local}\}$, which is a combination of the partition of task $A_m^t$ and the locally allocated CPU frequency to meet the demand of its long term profit. The MEC server also makes decision at every slot $\mathbf{y} = \{\mathbf{f}_{t}^{server}, G^t\}$, where $\mathbf{f}_{t}^{server} = \{f_{1,t}^{server},f_{2,t}^{server},...,f_{M,t}^{server}\}$ is a CPU frequency allocation vector. And it also decides the amount of energy to buy from power grid.

\noindent $\textbf{Payoff}$: Vehicular users benefit from offloading tasks by saving the energy as well as guaranteeing the delay constraint for each task. MEC server makes profit through payment for renting virtual machines to the vehicular users who offload their tasks. 
\end{myDef}

As stated in Section \uppercase\expandafter{\romannumeral3}, vehicles and server have their own utility functions, which are also mutually correlated. So seeking the best strategy for each of them is equivalent to optimizing the utility functions of vehicle users and MEC server sequentially.

We first address the strategy for vehicular users. Notice that there is variable coupling due to the constraint $C5$ in $\mathbf{P2}$. Hence we employ the Lagrange dual method. 
the optimization function in $\mathbf{P2}$ is a concave function with regard to variable $\alpha_i^t$ and $\beta_i^t$. The Lagrangian relaxation for $\mathbf{P2}$ is defined as

\begin{align*}
    \mathcal{L}(\alpha_i^t,\beta_i^t,\lambda_i^t) =& \widetilde{U}_i^t -\lambda_i^t(\alpha_i^t + \beta_i^t - 1) \\
    =&\lambda_i^t + \frac{1}{2}(\frac{R_i^t}{f_{\hat{s}}^{local}})^2(\alpha_i^t)^2 - [\lambda_i^t+V\Upsilon \\
   & -(|Q_{\hat{s},i}^t -       \zeta|-W_{\hat{s},i}^t-\Gamma_{\hat{s}})\frac{ R_i^t}{f_{\hat{s}}^{local}}]\alpha_i^t\\
    &-(V[\kappa f_s^{local}R_i^t-\frac{ R_i^t}{f_{i,t}^{server}}g(f_{i,t}^{server})+\Upsilon]+\lambda_i^t)\beta_i^t \\
\end{align*}
where $\lambda_i^t$ is the Lagrangian multiplier for constraint $C5$. Due to the convexity of the objective function, the minimum of $\mathcal{L}(\alpha_i^t,\beta_i^t,\lambda_i^t)$ can be derived by the Karush-Kuhn-Tucker (KKT) conditions described as below:

The optimization problem has a local minimum iff. there exists a unique $\lambda_i^t$ s.t.
\begin{align*}
    \nabla_{\alpha_i^t} \mathcal{L}(\alpha_i^t,\beta_i^t,\lambda_i^t) = 0\\
    \nabla_{\beta_i^t} \mathcal{L}(\alpha_i^t,\beta_i^t,\lambda_i^t) = 0\\
    \lambda_i^t \ge 0\\
    \lambda_i^t (\alpha_i^t + \beta_i^t - 1) = 0\\
\end{align*}

We can derive the corresponding $\alpha_i^t$ and $\beta_i^t$ as 

\begin{equation}
    \alpha_i^t = \left\{
    \begin{array}{lcl}
    0&,{\Psi_{i,t}^{loc}>\min\{\Psi_{i,t}^{off},\Psi_{i,t}^{cld}\}}\\
    \min\{1,V\frac{(f_s^{local})^2}{R_i^t}&(\Psi_{cld}-\Psi_{loc})\}\\& ,{\Psi_{i,t}^{loc}<\Psi_{i,t}^{cld}<\Psi_{i,t}^{off}}\\
\min\{1,V\frac{(f_s^{local})^2}{R_i^t}&(\Psi_{off}-\Psi_{loc})\}\\ &,{\Psi_{i,t}^{loc}<\Psi_{i,t}^{off}<\Psi_{i,t}^{cld}}
    \end{array} \right.
\label{alpha}
\end{equation}

\begin{equation}
    \beta_i^t =\left\{
    \begin{array}{lcl}
     0& &otherwise\\
     1-\alpha_i^t& &{\Psi_{i,t}^{loc}<\Psi_{i,t}^{off}<\Psi_{i,t}^{cld}}\\
     1 & &{\Psi_{i,t}^{off} < \max\{\Psi_{i,t}^{loc},\Psi_{i,t}^{cld}\}}\\
    \end{array}\right.
\label{beta}
\end{equation}
where $\Psi_{i,t}^{loc}$,$\Psi_{i,t}^{off}$,$\Psi_{i,t}^{cld}$ can be deemed as the unit price for executing the task for local computation, offloading to MEC server and dropping to cloud respectively. More specifically, these values are derived from the KKT condition and can be expressed as below,
\begin{equation}
    \Psi_{i,t}^{loc} = \frac{1}{V f_s^{local}} (|Q_{s,i}^t-\zeta|+W_{s,i}^t-\Gamma_s)+\kappa f_s^{local}
\end{equation}
\begin{equation}
    \Psi_{i,t}^{off} = \frac{g(f_{i,t}^{server})}{f_{i,t}^{server}}
\end{equation}
\begin{equation}
    \Psi_{i,t}^{cld} = \Upsilon
\end{equation}
\begin{proof}
    See Appendix B.
\end{proof}

With the two variables $\alpha_i^t$ and $\beta_i^t$ selected in (\ref{alpha}) and (\ref{beta}), the vehicle can further determine the pre-allocation amount of local CPU frequency by solving the problem in $\mathbf{P2}$ with regard to variable $f_s^{local}$.

Now that the vehicular users' decision space $\mathbf{x_m}$ is determined, MEC server, as the leader in the Game, will modify its decision $\mathbf{y}$ on the basis of $\mathbf{x_m}$. On scrutinizing the optimization problem for server, $\mathbf{P5}$ can be decomposed into two subproblems
\begin{align*}
    \mathbf{P5}(a):  \min \quad (\widetilde{B}^{t}\eta^-+H\chi^t)G^t
\end{align*}

\begin{align*}
    \mathbf{P5}(b): \min \quad & -\sum_{i=1}^M\beta_i^tR_i^t(\kappa\widetilde{B}^{t}\eta^-f_{i,t}^{server}
     +\frac{H}{f_{i,t}^{server}}g(f_{i,t}^{server}))\\
    s.t.\quad & (7) 
\end{align*}

The objective of $\mathbf{P5(a)}$ is a linear function. So the purchasing strategy $G^t$ can be selected as,
\begin{equation}
     G^t = \left\{
    \begin{array}{ll}
    \max\{(\sum_{i=1}^M N_i^t-U^t)^+ -E_{max},0\} &{(B^t-\theta)\eta^- + H\chi^t>0}\\
    {(\sum_{i=1}^M N_i^t-U^t)^+}&{(B^t-\theta)\eta^- + H\chi^t\le 0}
    \end{array} \right.
\label{eq23}
\end{equation}

In accord with our intuition, MEC server will supply task computation with its battery energy when it has abundant storage. On the other hand, purchase from power grid mounts when the battery is thirsty. We can prove that with adopting this strategy, the battery storage will stabilize within a fixed interval.
\begin{myTheo}
    Under the given strategy in (\ref{eq23}), the battery energy level of MEC server is confined within $[E_{max},\theta-\frac{1}{\eta^-}H\chi^t+\eta^+C^{max}]$, $\forall{t} \in \tau$.
\end{myTheo}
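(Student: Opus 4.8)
The plan is to establish the two-sided bound by induction on the slot index $t$, taking as the inductive hypothesis that $E_{max}\le B^t\le \theta-\frac{1}{\eta^-}H\chi^t+\eta^+C^{max}$ and showing the same band is preserved at $t+1$. The natural case split is dictated by the branch of the purchasing rule (\ref{eq23}), namely the sign of $(B^t-\theta)\eta^-+H\chi^t$: this quantity is positive exactly when $B^t>\theta-\frac{1}{\eta^-}H\chi^t$ and nonpositive otherwise. Within each branch I would substitute the chosen $G^t$ into the battery recursion (11), then use $(x)^+\ge 0$ together with the charging bound $C^t\le C^{max}$ to push the two inequalities through.

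For the floor, in the low-battery branch $B^t\le\theta-\frac{1}{\eta^-}H\chi^t$ the rule sets $G^t=(\sum_{i=1}^M N_i^t-U^t)^+$, so the discharge term in (11) vanishes and $B^{t+1}=B^t+\eta^+C^t\ge B^t\ge E_{max}$; the battery can only grow. In the high-battery branch one has $B^t>\theta-\frac{1}{\eta^-}H\chi^t\ge E_{max}$, where the last inequality is precisely the perturbation choice (\ref{eq15}); here the rule discharges, but the discharged amount is capped by $E_{max}$ (the extra constraint introduced in $\mathbf{P4}$), giving $B^{t+1}\ge B^t-E_{max}$, which I would combine with the threshold and the energy-causality bound (\ref{eq12}) to keep $B^{t+1}\ge E_{max}$.

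For the ceiling, the battery increases only through the charging term $\eta^+C^t\le\eta^+C^{max}$, while the discharge term is nonnegative. In the low-battery branch this yields $B^{t+1}\le B^t+\eta^+C^{max}\le\theta-\frac{1}{\eta^-}H\chi^t+\eta^+C^{max}$, exactly the claimed ceiling. In the high-battery branch the rule buys only the minimal grid energy and discharges, so whenever there is an energy deficit no charging occurs and $B^{t+1}\le B^t$, which stays under the ceiling by the inductive hypothesis.

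The hard part will be the residual configuration in which the battery already sits in the high-battery branch yet renewable generation exceeds consumption, so a surplus is available to charge: there the naive recursion $B^{t+1}=B^t+\eta^+C^t$ threatens to overshoot the ceiling, and symmetrically the one-slot drop in the high branch threatens to undershoot $E_{max}$ by the mere amount $E_{max}$ rather than landing exactly at $E_{max}$. Closing the argument requires reconciling the per-slot charge cap $\eta^+C^{max}$ and discharge cap $E_{max}$ with the perturbation inequality (\ref{eq15}) and the causality constraint (\ref{eq12}) so that the level can neither cross $E_{max}$ from above nor exceed $\theta-\frac{1}{\eta^-}H\chi^t+\eta^+C^{max}$ from below; matching these constants is the delicate step. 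Once the charging and discharging quantities are pinned to $\eta^+C^{max}$ and $E_{max}$ respectively, the induction closes, and as a by-product constraint (\ref{eq12}) is seen to hold throughout, justifying the earlier claim that the solution to $\mathbf{P4}$ also respects the original energy-causality constraint.
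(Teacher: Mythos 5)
Your proposal follows exactly the route of the paper's appendix: induction over slots, with the case split dictated by which branch of (\ref{eq23}) is active, the perturbation inequality (\ref{eq15}) securing the floor and the charge cap $\eta^+C^{max}$ securing the ceiling. Your low-battery branch is essentially the paper's second case verbatim: $G^t$ absorbs the entire deficit, the discharge term in (11) vanishes, $B^{t+1}=B^t+\eta^+C^t$, and both bounds carry over. You also derive the switching threshold correctly as $\theta-\frac{H\chi^t}{\eta^-}$ from the sign of $(B^t-\theta)\eta^-+H\chi^t$, where the paper itself drifts into the differently scaled quantity $\eta^-\theta-H\chi^t$.

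The gap is that you explicitly leave the high-battery branch open, and it cannot be closed from the constraints you cite. First, the ingredient you are missing is the mutual exclusivity of charging and discharging: since $C^t\le\min\{C^{max},(U^t-\sum_{i=1}^M N_i^t)^+\}$, charging is possible only when the deficit $(\sum_{i=1}^M N_i^t-U^t)^+$ is zero, so in the high branch with a deficit one has $C^t=0$ and hence $B^{t+1}\le B^t$; this exclusivity is what the paper's terse assertion ``$B^{t+1}<B^t$'' tacitly invokes. Second, the surplus subcase you identify as the hard part is a real hole: if $B^t$ already lies strictly between the threshold and the ceiling and a renewable surplus is charged, then $B^{t+1}=B^t+\eta^+C^t$ can exceed $\theta-\frac{1}{\eta^-}H\chi^t+\eta^+C^{max}$; the induction closes only under the additional stipulation that the server does not charge above the threshold, which the paper never states and its proof silently assumes. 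Third, your floor worry is also well-founded: in the high branch $B^{t+1}\ge B^t-(\text{discharge cap})$ with $B^t$ just above the threshold yields $B^{t+1}\ge E_{max}$ only if $\theta\ge\frac{H\chi^t}{\eta^-}+2E_{max}$ (and more once the $\eta^-$ factor multiplying the discharge in (11) versus the cap in $\mathbf{P4}$ is tracked consistently), whereas (\ref{eq15}) guarantees slack of only $E_{max}$; the paper's chain ``$B^{t+1}>\eta^-\theta-H\chi^t-E_{max}>E_{max}$'' does not follow from (\ref{eq15}) unless $\eta^-\ge2$. So as written your attempt is an incomplete proof: the skeleton is the paper's, and you have correctly located the two soft spots, but the needed repairs --- charge/discharge exclusivity, a no-charging-above-threshold rule for $C^t$, and a strengthened perturbation constant --- are never supplied. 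To be fair, the paper's own proof glosses over precisely the same spots.
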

\begin{proof}
See Appendix C.
\end{proof}

Notice that the proved range of $B^t$ means that $B^t > E_{max}$. So the constraint ($\ref{eq12}$) can be satisfied by $\eta^-((\sum_{i=1}^M N_i^t-U^t)^+-G^t) \le E_{max} \le B^t $.

$\mathbf{P5}(b)$ determines the CPU frequency allocation on server and the task-specific sale price for each vehicle. Due to the complexity of the original problem, we introduce a Lagrangian dual problem $\mathbf{\hat {P5}}(b)$. So the allocation and pricing strategy can be derived from a problem given by
\begin{align*}
    \mathbf{\hat {P5}}(b): \quad \max_{\mu_t,\nu_t} \quad \min_{\mathbf{f},\mathbf{g}} \quad \phi_t(\mathbf{f}_t^{server},\mathbf{g}(f_{i,t}^{server}), \mathbf{\mu}_t,\mathbf{\nu}_t) \\
\end{align*}
where 
\begin{align*}
\phi_t(\mathbf{f}_t^{server},\mathbf{g}(f_{i,t}^{server}), \mathbf{\mu}_t,\mathbf{\nu}_t) = \\ -\sum_{i=1}^M\beta_i^tR_i^t(\kappa\widetilde{B}^{t}\eta^-f_{i,t}^{server}+\frac{H}{f_{i,t}^{server}}g(f_{i,t}^{server}))\\
+\sum_{k=1}^K\nu_{k,t}(\sum_{i=1}^M \mathbbm{1}_{\{\beta_i^t>0\}}r_{i,k}-\Omega_k)
\end{align*}
and  $\nu_t = \{\nu_{1,t},\nu_{2,t},...\nu_{K,t}\}$, is the vector of Lagrangian multiplies and must satisfy
\begin{equation}
     \nu_{k,t} \ge 0 \quad  \forall k \in \mathcal{K}
\end{equation}

The dual problem is solved by using gradient projection method, and the Lagrangian multipliers are updated as following:
\begin{equation}
    \nu_{k,t}^{(n)} = [\nu_{k,t}^{(n-1)}-\ell(\sum_{i=1}^M \mathbbm{1}_{\{\beta_i^t>0\}}r_{i,k}-\Omega_k)]^+
\label{eq25}
\end{equation}

We display the whole decision process in $\mathbf{Algorithm} $ $\mathbf{\ref{algorithm1}}$, where the core idea is optimization of $\nu$ , $f$ and $g$, iteratively and alternatively.

\begin{algorithm}[htb] 
\caption{Framework of Solving $\mathbf{\hat {P5}}(b)$} 
\label{alg:Framwork} 
\begin{algorithmic}[1] 
\Require 
$\widetilde{B}^t$; 
\Ensure 
$f_{i,t}^{server}$, $g(f_{i,t}^{server})$; 
\label{code:fram:extract} 
\State Let all $\nu_{k,t}^{(0)}$ = $\nu_{max}$.
\label{code:fram:trainbase} 
\label{code:fram:add} 
\While {$\big|\phi_t|_{\nu_{t}^{(n)}}$ - $\phi_t|_{\nu_{t}^{(n-1)}}\big| \ge \varepsilon $ and $n < N$}
\State Compute new values of $\nu_t$ through (\ref{eq25}).
\State Let $f^* := \{\}$ and $g^* := \{\}$.
\For{$i=1$ to $M$}
\State Consider the three cases separately:
\If {$\Psi_{i,t}^{off} < \max\{\Psi_{i,t}^{loc},\Psi_{i,t}^{cld}\}$}
{
}
\State Select ${f}_{1}^*$ and $g_{1}^*$ that minimize $\Lambda_{i,1}$.
\ElsIf{$\Psi_{i,t}^{loc} < \max\{\Psi_{i,t}^{off},\Psi_{i,t}^{cld}\}$}{}
\State Select ${f}_{2}^*$ and $g_{2}^*$ that minimize $\Lambda_{i,2}$.
\Else
\State Select ${f}_{3}^*$ and $g_{3}^*$ that minimize $\Lambda_{i,3}$.
\EndIf
\State Let $\Lambda_{i,h} := \min \{\Lambda_{i,1},\Lambda_{i,2},\Lambda_{i,3}\}$.
\State $f^*.push\_back({f}_{i,h}^*)$ and $g^*.push\_back(g_{i,h}^*)$
\EndFor
\State $n = n + 1.$
\EndWhile \\
\Return $ f^*$, $ g^*$; 
\end{algorithmic} 
\label{algorithm1}
\end{algorithm}

We have derived three different optimization objectives under different cases, i.e. $\Lambda_1$, $\Lambda_2$, $\Lambda_3$. They are defined as below:
\begin{align*}
    \Lambda_{i,1} = & - R_i^t(\kappa\widetilde{B}^{t}\eta^-f_{i,t}^{server}+\frac{H}{f_{i,t}^{server}}g(f_{i,t}^{server}))\\
    &+\sum_{k=1}^K\nu_{k,t}(\sum_{i=1}^M  r_{i,k}-\Omega_k)\\
    \Lambda_{i,2} = & H(f_s^{local})^2(\frac{g(f_{i,t}^{server})}{f_{i,t}^{server}})^2 + \sum_{k=1}^K\nu_{k,t}( r_{i,k}-\Omega_k) \\
    &+(\kappa\widetilde{B}^{t}\eta^-f_{i,t}^{server}(f_s^{local})^2-R_i^tTH)\frac{g(f_{i,t}^{server})}{f_{i,t}^{server}}\\
    &-R_i^t T \kappa\widetilde{B}^{t}\eta^-f_{i,t}^{server}\\
    \Lambda_{i,3} = &-\sum_{k=1}^K\nu_{k,t}\Omega_k
\end{align*}

Interestingly, we can efficiently derive the solutions to all the three optimization problems due to the linearity or quadraticity of the objective functions. The solving procedure can be seen in Appendix D.

\begin{figure}[tb!]
  \centering
  \includegraphics[width=0.5\textwidth]{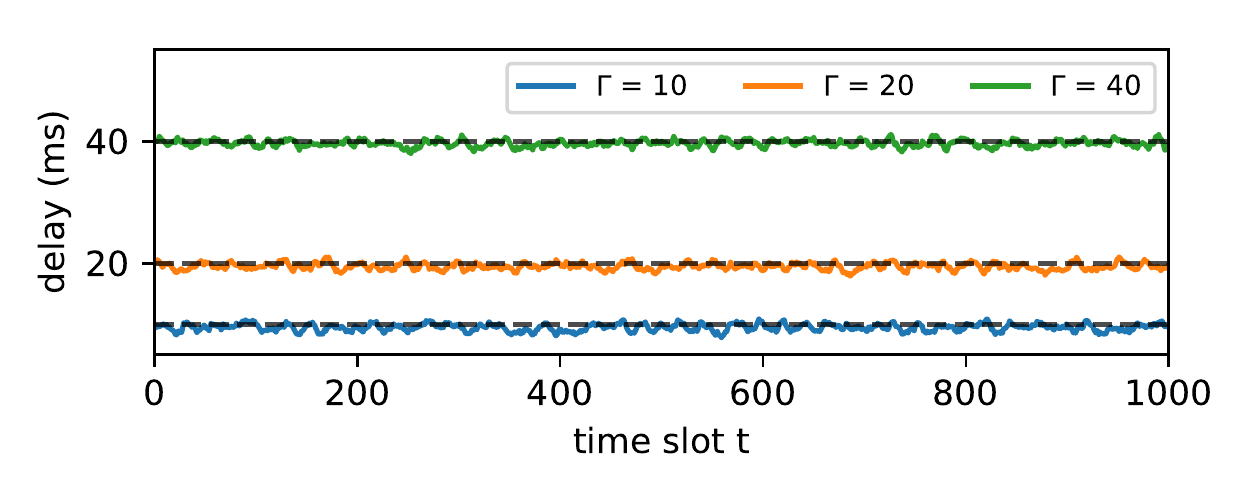}
   \vspace{-15pt}
  \caption{}
  \label{fig:delay_t}
\end{figure}

\begin{figure}[tb!]
  \centering
  \includegraphics[width=0.5\textwidth]{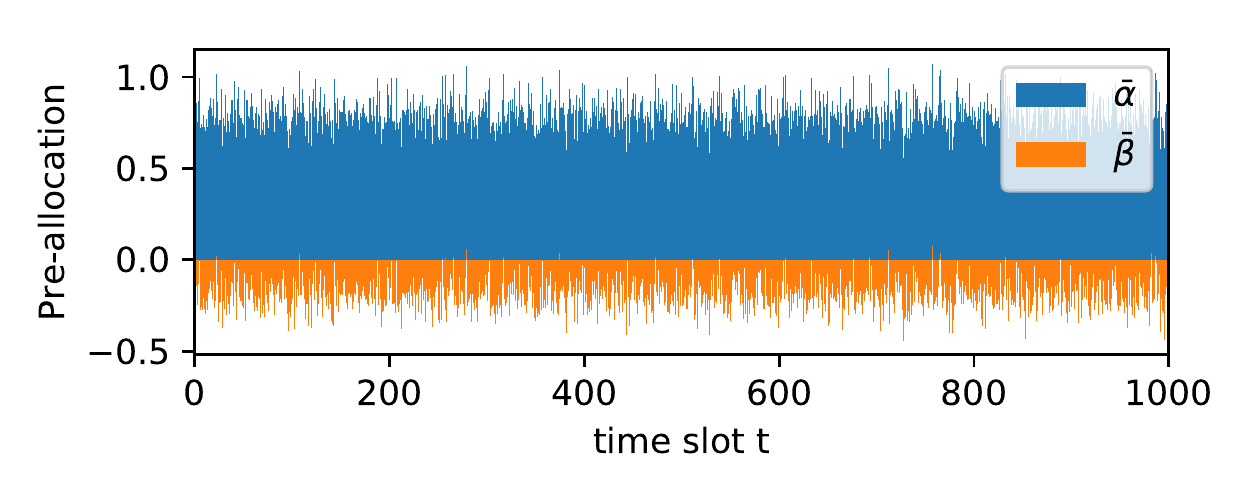}
   \vspace{-15pt}
  \caption{}
  \label{fig:allocation_t}
\end{figure}

\section{Numerical Results}
\begin{figure*}[tb!]
  \centering
  \includegraphics[width=1.0\textwidth]{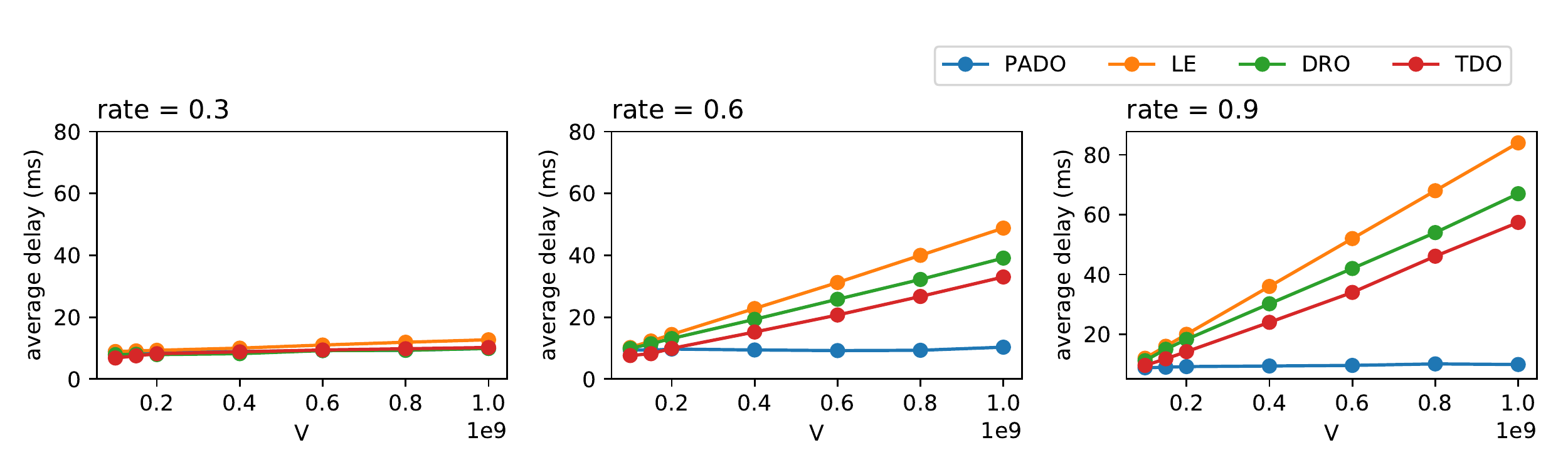}
   \vspace{-15pt}
  \caption{}
  \label{fig:delay_V}
\end{figure*}

In this section, we will verify the theoretical results derived in Section \uppercase\expandafter{\romannumeral4} and evaluate the performance of the proposed algorithm through simulations. We consider an MEC system with $|\mathcal{M}|$ = 50 randomly deployed mobile vehicles. We also set $\kappa = 10^{-28}$, $L = 1000$ cycles/bit and $f_i^{max} = 2$ GHz for all mobile vehicles. The locally generated computation task $R_{task}^t$ is assumed to be uniformly distributed within [10, 20] units, and each unit represents 1000 bit computation amount. The simulation results are conducted over 1000 consecutive time slots with slot length $\tau = 1$ ms. The control parameter $V$ is chosen extensively from $V = 10^{8}$ to $V = 10^{11}$.

We compare our proposed method with the existing paradigms listed below:

i) $\mathbf{Local}$ $\mathbf{Execution(LE)}$: No offloading happens in this scenario. All the tasks are executed locally, with local CPU frequency that maximizes $\mathbf{P2}$.

ii) $\mathbf{Dynamic}$ $\mathbf{Random}$ $\mathbf{Offloading(DRO)}$: The vehicles will stochastically offload part of its tasks to the MEC server. And the MEC Server accept these offloaded tasks as long as it has surplus computational resources. Otherwise, the tasks will be offloaded to the cloud.

iii) $\mathbf{Task}$ $\mathbf{Backlog}$ $\mathbf{Based}$ $\mathbf{Dynamic}$ $\mathbf{Offloading (TDO)}$: The vehicles can make decisions to locally execute the tasks or offload them. As a mainstream method in the preceding literature~\cite{du2018computation}, this approach focuses on maintaining the stability of a task backlog queue. For fair comparison, other settings of the framework are shared in our paper.

\subsection{Service Delay Performance}
Fig.~\ref{fig:delay_t} shows the delay evolution with different task deadlines throughout the testing time. The black dashed lines demonstrate the delay requirement of each kind of tasks. From the figure, we can observe that the average of delay is guaranteed in our method by stabilizing the delay queue. Fig.~\ref{fig:allocation_t} shows the corresponding pre-allocation strategy for the vehicle to stabilize its delay queue. It shows that our framework enables the vehicles to fully utilize the pre-set deadline requirement so that the delay can jitter close to the dashed line.

In Fig.~\ref{fig:delay_V}, we demonstrate the service delay performance for PADO algorithm and other three benchmark methods. We test the methods under different task arrival rates, which represent the severity of workload. 

It can be observed that when the task arrival rate is low, all the approaches can easily handle the situation by vehicles' local computation resource. In this situation, PADO will allocate its resources and fulfill the delay requirement more wisely. When the task arrival rate becomes high,  for all the benchmarks, the average delay increases linearly with V and becomes unbounded when V goes to infinity. However, with our pre-allocation framework PADO, the service delay can be controlled beforehand via a pre-defined parameter. This property is well preserved especially when the control parameter $V$ is relatively small, which means the vehicle attaches more importance to the service delay.

\begin{figure}[tb!]
  \centering
  \includegraphics[width=0.45\textwidth]{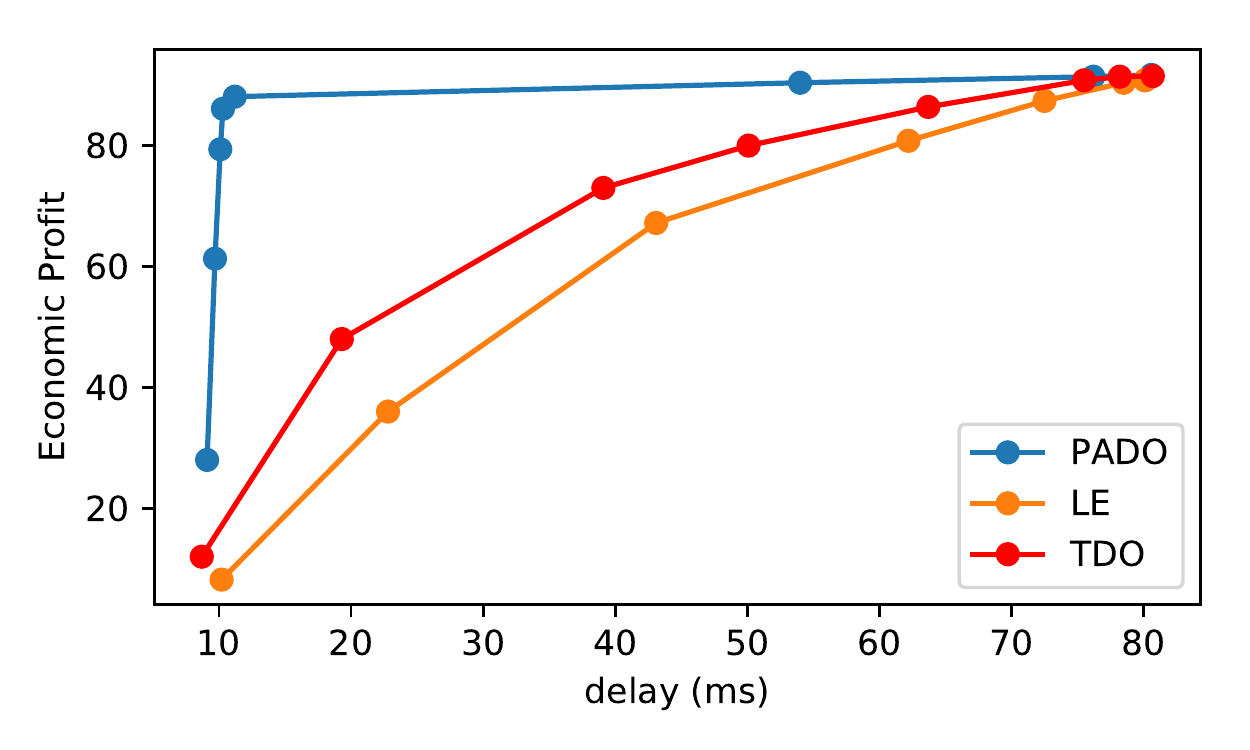}
   \vspace{-15pt}
  \caption{}
  \label{fig:tradeoff}
\end{figure}

\subsection{Economic Profit of Vehicles}
Other than the delay requirement, our framework also focuses on economic profit for each vehicle. The economic profit includes minimizing local energy consumption, offloading payment and dropping loss. So the PADO framework can be summarized as searching for the highest economic profit that satisfies task's delay constraint. From Fig.~\ref{fig:tradeoff}, we observe that methods under Lyapunov optimization framework have tradeoff between task delay and execution expenditure. For Local Execution Strategy (LE), when the control parameter $V$ is low, it will allocate more local CPU resources to lessen its execution delay, so the economic profit is low. On the other hand, when $V$ goes larger,  the economic profit rises for less local energy consumption, but with the delay rises correspondingly. For TDO and PADO, this process is more complex as the vehicles can decide their offloading strategy by tuning $\alpha$ and $\beta$ and they both show a similar tradeoff with LE. The difference between TDO and PADO is that our proposed PADO has a phase that stabilizes the delay but largely varies in economic profit. As stated above, vehicles in this phase can fully utilize the deadline and optimizes its economic profit.  We observe that the three methods converge to the same point in Fig.~\ref{fig:tradeoff}. It means when V is ultra-large, more emphasis is put on the economic profit. The most efficient way to lower expenditure is to locally execute the tasks regardless of delay. However, such $V$ should be avoided in real world application as the offloading strategy does not take advantage of the service provided by MEC Server.

\begin{figure}[tb!]
  \centering
  \includegraphics[width=0.45\textwidth]{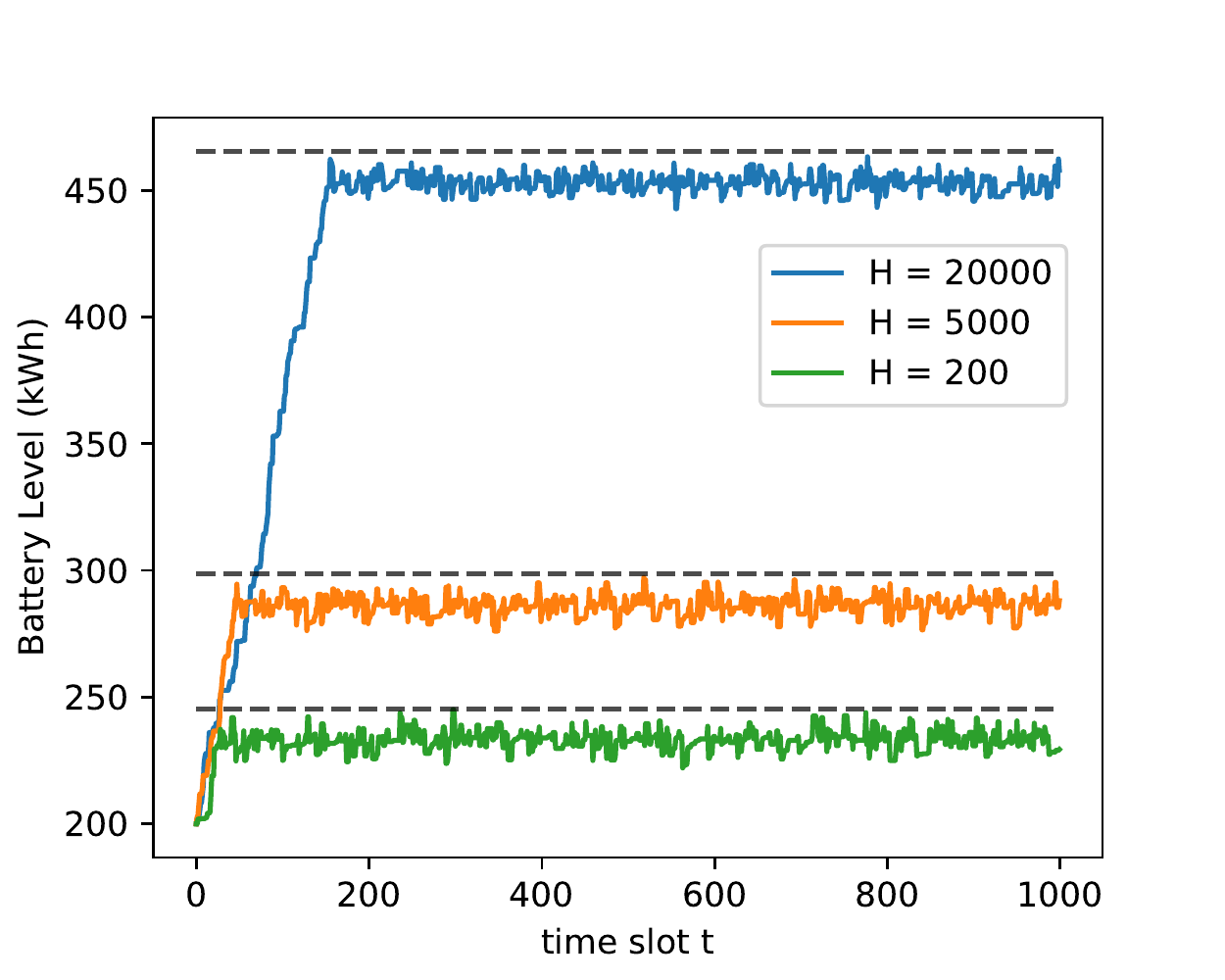}
   \vspace{-15pt}
  \caption{}
  \label{fig:battery}
\end{figure}

\subsection{Performance on MEC Server}
Our PADO framework also pays attention to the performance on the service providers, a.k.a the MEC Server. One of the main concerns of MEC Server is the battery charge/discharge management. First, 
Fig.~\ref{fig:battery} illustrates the battery level evolution process under control parameter $H$ ranging from 200 to 20000. It directly shows the stability of battery level, which has been theoretically proven in Appendix~\ref{AppC}. The battery is initially charged with 200 kWh electricity power. We can see that for different $H$, the battery levels first increase linearly and then jitter within a fixed range. Throughout the time slots, the battery level never crosses the upper bound $\theta - H \chi^t / \eta^- +\eta^+C^{max}$, which is represented by the balck dashed lines in Fig~\ref{fig:battery} for different $H$ respectively. When $H$ becomes larger, the stable level of battery also increases. It means the battery size should be expanded for increased $H$.

\begin{figure}[tb!]
  \centering
  \includegraphics[width=0.45\textwidth]{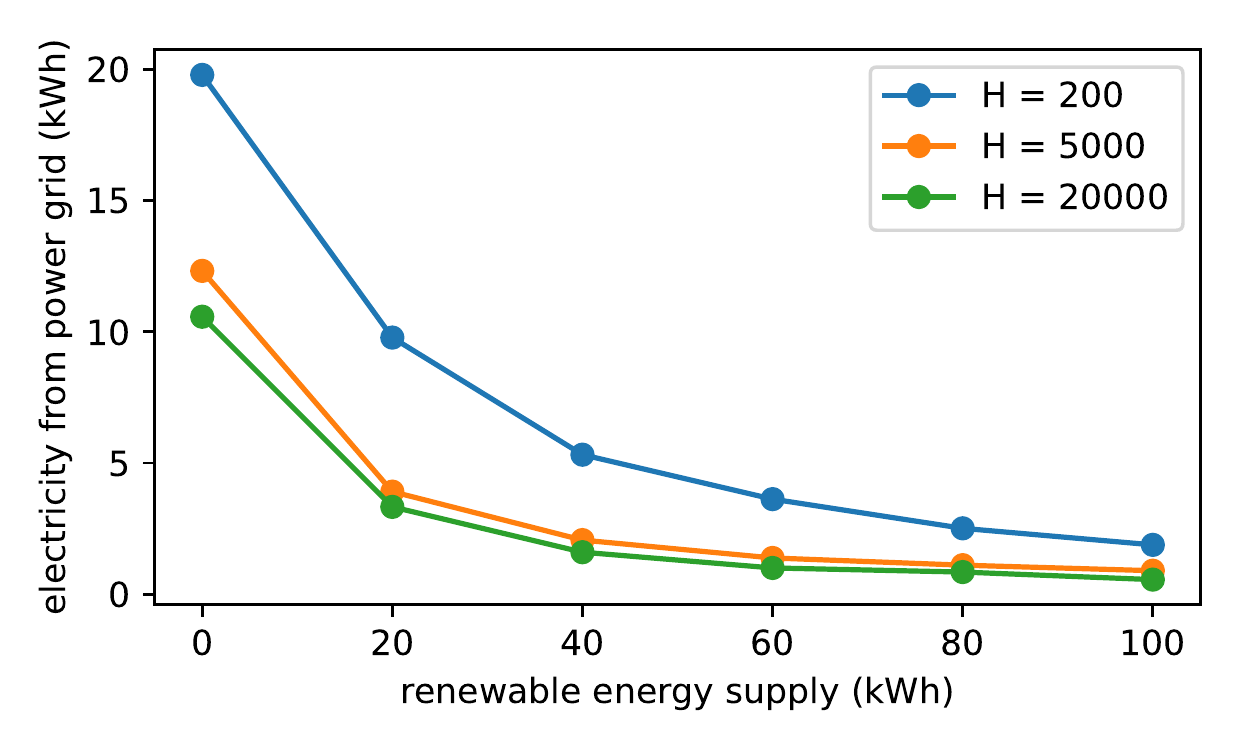}
   \vspace{-15pt}
  \caption{}
  \label{fig:renew}
\end{figure}

Fig~\ref{fig:renew} shows the energy management strategy of the MEC Server. One of the most frustrating problems in using renewable energy as power supply is its unstable generation. We can see that our PADO framework enables the MEC Server to automatically change its energy source. When the renewable energy supply is low, the energy purchased from power grid would be high to maintain its battery level and provide service to vehicles. Another property is that when the renewable energy supply reduces, the overall energy supplement (from power grid and renewable source) will also decline. We also compare the energy management strategy under different $H$. When $H$ is large, the MEC Server lays more weight to the expenditure, so the purchase amount drops correspondingly.

In our MEC settings, the MEC Server has limited computation resources, which means it cannot provide service for all the offloading requests. The server may accept or reject some of them by tuning its sale price to maximize its own profit. Fig.~\ref{fig:resource} shows the relation between the quantity of computation resources that an MEC Server owns and its transaction price with vehicles. As the MEC Server has more computation resources, the average price\footnote{The average price means the average transaction price for one unit of computation task.} decreases to attract more offloading requests from vehicles. Indeed the loss of income in unit price is worthwhile for the MEC Server, since the server can earn it back by expanding the customer group. So we also observe an increase in the total revenue. The unit price will no longer decrease When the MEC Server has abundant computation resources. It can be explained by the reason that further decrease in price will not generate higher overall profit for the server.

\begin{figure}[tb!]
  \centering
  \includegraphics[width=0.45\textwidth]{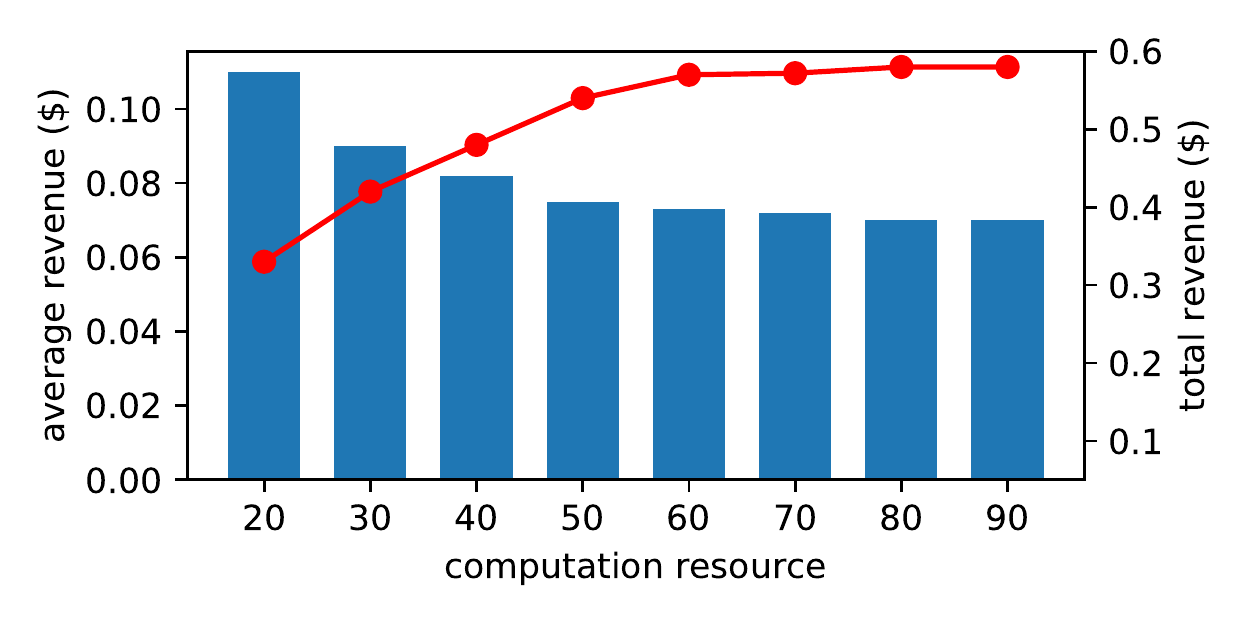}
   \vspace{-15pt}
  \caption{}
  \label{fig:resource}
\end{figure}

\begin{appendices}

\section{}
We prove the constraints for $B^t$ by induction. First, if $\eta^-\theta-H\chi^t < B^t < \theta\eta^--H\chi^t+\eta^+C^{max}$, it denotes that battery has abundant energy, so the server choose to discharge its energy from battery. So $B_i^{t+1} < B_i^{t}$. Moreover, $G^t > (\sum_{i=1}^M N_i^t-U^t)^+ - E_{max}$. Hence $B^{t+1} > B^t -E_{max} + \eta^+C^t > \eta^-\theta-H\chi^t - E^{max} > E^{max}$ according to our definition of $\theta$ in (\ref{eq15}). When $E_{max} < B^t < \eta^-\theta-H\chi^t$, $B_i^{t+1}=B_i^t+\eta^+C^t < \eta^-\theta-H\chi^t + C^{max}$. Also, $B_i^{t+1} \ge B_i^t > E_{max}$. Therefore, we have $B^t \in  [E_{max},\theta\eta^--H\chi^t+\eta^+C^{max}]$.
\label{AppC}
\end{appendices}

\bibliographystyle{abbrv}
\bibliography{survey}

\begin{thebibliography}{10}

\bibitem{burd1996processor}
T.~D. Burd and R.~W. Brodersen.
\newblock Processor design for portable systems.
\newblock {\em Journal of VLSI signal processing systems for signal, image and
  video technology}, 13(2-3):203--221, 1996.

\bibitem{cordeschi2014reliable}
N.~Cordeschi, D.~Amendola, and E.~Baccarelli.
\newblock Reliable adaptive resource management for cognitive cloud vehicular
  networks.
\newblock {\em IEEE Transactions on Vehicular Technology}, 64(6):2528--2537,
  2014.

\bibitem{du2018computation}
J.~Du, F.~R. Yu, X.~Chu, J.~Feng, and G.~Lu.
\newblock Computation offloading and resource allocation in vehicular networks
  based on dual-side cost minimization.
\newblock {\em IEEE Transactions on Vehicular Technology}, 68(2):1079--1092,
  2018.

\bibitem{liu2014effective}
Y.~Liu and M.~J. Lee.
\newblock An effective dynamic programming offloading algorithm in mobile cloud
  computing system.
\newblock In {\em 2014 IEEE Wireless Communications and Networking Conference
  (WCNC)}, pages 1868--1873. IEEE, 2014.

\bibitem{mao2016dynamic}
Y.~Mao, J.~Zhang, and K.~B. Letaief.
\newblock Dynamic computation offloading for mobile-edge computing with energy
  harvesting devices.
\newblock {\em IEEE Journal on Selected Areas in Communications},
  34(12):3590--3605, 2016.

\bibitem{qiu2017augmented}
H.~Qiu, F.~Ahmad, R.~Govindan, M.~Gruteser, F.~Bai, and G.~Kar.
\newblock Augmented vehicular reality: Enabling extended vision for future
  vehicles.
\newblock In {\em Proceedings of the 18th International Workshop on Mobile
  Computing Systems and Applications}, pages 67--72. ACM, 2017.

\bibitem{rabaey2002digital}
J.~M. Rabaey, A.~P. Chandrakasan, and B.~Nikolic.
\newblock {\em Digital integrated circuits}, volume~2.
\newblock Prentice hall Englewood Cliffs, 2002.

\bibitem{sasaki2016vehicle}
K.~Sasaki, N.~Suzuki, S.~Makido, and A.~Nakao.
\newblock Vehicle control system coordinated between cloud and mobile edge
  computing.
\newblock In {\em 2016 55th Annual Conference of the Society of Instrument and
  Control Engineers of Japan (SICE)}, pages 1122--1127. IEEE, 2016.

\bibitem{vu2012real}
A.~Vu, A.~Ramanandan, A.~Chen, J.~A. Farrell, and M.~Barth.
\newblock Real-time computer vision/dgps-aided inertial navigation system for
  lane-level vehicle navigation.
\newblock {\em IEEE Transactions on Intelligent Transportation Systems},
  13(2):899--913, 2012.

\bibitem{wang2017computational}
W.~Wang and W.~Zhou.
\newblock Computational offloading with delay and capacity constraints in
  mobile edge.
\newblock In {\em 2017 IEEE International Conference on Communications (ICC)},
  pages 1--6. IEEE, 2017.

\bibitem{xu2013survey}
Y.~Xu and S.~Mao.
\newblock A survey of mobile cloud computing for rich media applications.
\newblock {\em IEEE Wireless Communications}, 20(3):46--53, 2013.

\bibitem{yang2016distributed}
B.~Yang, J.~Li, Q.~Han, T.~He, C.~Chen, and X.~Guan.
\newblock Distributed control for charging multiple electric vehicles with
  overload limitation.
\newblock {\em IEEE Transactions on Parallel and Distributed Systems},
  27(12):3441--3454, 2016.

\bibitem{zaharia2010spark}
M.~Zaharia, M.~Chowdhury, M.~J. Franklin, S.~Shenker, and I.~Stoica.
\newblock Spark: Cluster computing with working sets.
\newblock {\em HotCloud}, 10(10-10):95, 2010.

\bibitem{zhang2018energy}
G.~Zhang, W.~Zhang, Y.~Cao, D.~Li, and L.~Wang.
\newblock Energy-delay tradeoff for dynamic offloading in mobile-edge computing
  system with energy harvesting devices.
\newblock {\em IEEE Transactions on Industrial Informatics}, 14(10):4642--4655,
  2018.

\end{thebibliography}

\end{document}